\title{Dynamic Balanced Graph Partitioning\thanks{%
A preliminary version of this paper appeared as 
``Online Balanced Repartitioning'' in the proceedings of the
30th International Symposium on DIStributed Computing (DISC 2016).
\funding{Research supported by the German-Israeli Foundation for Scientific Research
(GIF) Grant I-1245-407.6/2014, the Polish National Science Centre grants
2016/22/E/ST6/00499 and 2016/23/N/ST6/03412, and ERC Consolidator project AdjustNet grant agreement No.~864228.}}}
\author{
Chen~Avin\thanks{School of Electrical and Computer Engineering, Ben Gurion University of the Negev, Israel} \and
Marcin~Bienkowski\thanks{Institute of Computer Science, University of Wroclaw, Poland} \and
  Andreas~Loukas\thanks{EPFL, Switzerland} \and
  Maciej~Pacut\thanks{Faculty of Computer Science, University of Vienna, Austria} \and
  Stefan~Schmid\footnotemark[5]
}
\newcommand{\ONL}{\textsc{Onl}\xspace}
\newcommand{\ONLreq}{\textsc{Onl}^\textnormal{req}\xspace}
\newcommand{\ONLmig}{\textsc{Onl}^\textnormal{mig}\xspace}
\newcommand{\GREEDY}{\textsc{Greedy}\xspace}
\newcommand{\MGREEDY}{M^\textsc{GR}}
\newcommand{\MOPT}{M^\textsc{OPT}}
\newcommand{\OFF}{\textsc{Off}\xspace}
\newcommand{\OPT}{\textsc{Opt}\xspace}
\newcommand{\ALG}{\textsc{Alg}\xspace}
\newcommand{\CREP}{\textsc{Crep}\xspace}
\newcommand{\DET}{\textsc{Det}\xspace}
\newcommand{\CREPreq}{\CREP^\textnormal{req}}
\newcommand{\CREPmig}{\CREP^\textnormal{mig}}
\newcommand{\ovr}{\textsc{ovr}}
\newcommand{\T}{\mathcal{T}}
\newcommand{\eps}{\ensuremath{\epsilon}}
\newcommand{\cut}{\textsc{cut}}
\newcommand{\eold}{e^\textrm{old}}
\newcommand{\enew}{e^\textrm{new}}
\newcommand{\optmig}{\textsc{opt-mig}}
\newcommand{\spl}{\textsc{sp}}
\newcommand{\final}{\textsc{fin}}
\newcommand{\size}{\textsc{size}}
\newcommand{\comm}{\textsc{comm}}
\newcommand{\Phiinit}{\Phi^\textrm{S}}
\newcommand{\winit}{w^\textrm{S}}
\newcommand{\set}{\mathcal{S}}
\newcommand{\A}{\mathcal{A}}
\newcommand{\X}{\mathcal{X}}
\newcommand{\F}{\mathcal{F}}
\begin{document}

\maketitle

\begin{abstract}
This paper initiates the study of the classic balanced graph partitioning
problem from an online perspective: Given an~arbitrary sequence of pairwise
communication requests between~$n$~nodes, with patterns that may change over
time, the objective is~to~service these requests efficiently by partitioning
the nodes into~$\ell$ clusters, each of size~$k$, such that frequently
communicating nodes are located in the same cluster. The partitioning can be
updated dynamically by \emph{migrating} nodes between clusters. The goal is to
devise online algorithms which jointly minimize the amount of inter-cluster
communication and migration cost.

The problem features interesting connections to other well-known online
problems. For example, scenarios with~$\ell=2$ generalize online paging, and
scenarios with~$k=2$ constitute a~novel online variant of maximum matching. We
present several lower bounds and algorithms for settings both with and without
cluster-size augmentation. In particular, we prove that any deterministic
online algorithm has a competitive ratio of at least~$k$, even with
\emph{significant} augmentation. Our main algorithmic contributions are
an~$O(k \log{k})$-competitive deterministic algorithm for the general setting
with constant augmentation, and a constant competitive algorithm for the
maximum matching variant.
\end{abstract}

\begin{keywords}
clustering, graph partitioning, competitive analysis, cloud computing
\end{keywords}

\begin{AMS}
68W01,   
68W05,   
68W40,   
68Q25    
\end{AMS}


\section{Introduction}

Graph partitioning problems, like minimum graph bisection or minimum balanced
cuts, are among the most fundamental problems in theoretical computer science.
They are intensively studied also due to their numerous practical
applications, e.g., in communication networks, parallel processing, data
mining and community discovery in social networks. Interestingly however, not
much is known today about how~to~\emph{dynamically} partition nodes that
interact or communicate in a~time-varying fashion.

This paper initiates the study of a natural model for \emph{online graph
partitioning}. We are given a~set of~$n$ nodes with time-varying pairwise
communication patterns, which have to be partitioned into~$\ell$~clusters of
equal size~$k$. Intuitively, we would like to minimize inter-cluster
interactions by mapping frequently communicating nodes to the same cluster.
Since communication patterns change over time, partitions should be
readjusted dynamically, that is, the nodes should be \emph{repartitioned}, in
an online manner, by \emph{migrating} them between clusters. The objective is
to jointly minimize inter-cluster communication and repartitioning costs,
defined respectively as the number of communication requests ``served
remotely'' and the number of times nodes are migrated from one cluster to
another.

This fundamental online optimization problem has many applications. For
example, in the context of~cloud computing, $n$ may represent virtual machines
or containers that are distributed across~$\ell$ physical servers, each having
$k$ cores: each server can host $k$ virtual machines. We would like to
(dynamically) distribute the virtual machines across the servers, so that
datacenter traffic and migration costs are minimized.

\subsection{The Model}

Formally, the online \emph{Balanced RePartitioning} problem (BRP) is defined as
follows. There is a set of $n$ nodes, initially distributed arbitrarily
across $\ell$~clusters, each of size~$k$. We call two nodes~$u,v\in V$
\emph{collocated} if they are in the same cluster.

An input to the problem is a sequence of communication requests $\sigma =
(u_1,v_1),$ $(u_2,v_2), (u_3,v_3), \ldots$, where pair $(u_t,v_t)$ means that
the nodes $u_t,v_t$ exchange a fixed amount of data. For succinctness of later descriptions,
we assume that a request $(u_t,v_t)$ occurs at time $t \geq 1$. At any time~$t
\geq 1$, an online algorithm needs to serve the~communication
request~$(u_t,v_t)$. Right before serving the request, the online algorithm
can \emph{re}partition the nodes into new clusters. We assume that
a~communication request between two collocated nodes costs 0. The cost of a
communication request between two nodes located in different clusters is
normalized to~1, and the cost of migrating a node from one cluster to another
is~$\alpha \geq 1$, where $\alpha$ is a parameter (an integer). For any
algorithm \ALG, we denote its total cost (consisting of communication plus
migration costs) on sequence $\sigma$ by $\ALG(\sigma)$.

The description of some algorithms (in particular the ones in \cref{sec:upper}
and \cref{sec:crep}) is more natural if they first serve a request and then
optionally migrate. Clearly, this modification can be implemented at no extra cost by
postponing the migration to the next step.

We are in the realm of competitive worst-case analysis and compare the
performance of an online algorithm to the performance of an optimal offline
algorithm. Formally, let~$\ONL(\sigma)$, resp.~$\OPT(\sigma)$, be the cost
incurred by an online algorithm \ONL, resp.~by an optimal offline
algorithm \OPT, for a given~$\sigma$. In contrast to \ONL, which learns the~requests one-by-one as
it serves them, \OPT has a complete knowledge of the entire request
sequence~$\sigma$ \emph{ahead of~time}. The goal is to design online repartitioning
algorithms that provide worst-case guarantees. In particular, $\ONL$ is said
to be \emph{$\rho$-competitive} if there is a constant $\beta$, such that for any
input sequence~$\sigma$ it holds that
\[
	\ONL(\sigma) \leq \rho \cdot \OPT(\sigma) + \beta.
\]
Note that $\beta$ cannot depend on input $\sigma$ but can depend on other
parameters of the problem, such as the number of nodes or the number of clusters.
The minimum $\rho$ for which $\ONL$ is $\rho$-competitive is called the 
\emph{competitive ratio} of $\ONL$. 

We consider two different settings:

\begin{description}

\item[Without augmentation:] The nodes fit perfectly into the clusters,
i.e.,~$n=k\cdot \ell$. Note that in this setting, due to cluster capacity
constraints, a node can never be migrated alone, but it must be \emph{swapped}
with another node at a cost of~$2 \cdot \alpha$. We also assume that when an
algorithm wants to migrate more than two nodes, this has to be done using
several swaps, each involving two nodes.

\item[With augmentation:] An online algorithm has access to additional space
in each cluster. We say that an algorithm is~$\delta$-augmented if the size of
each cluster is~$k' = \delta \cdot k$, whereas the total number of nodes
remains~$n = k\cdot \ell < k'\cdot \ell$. As usual in competitive analysis,
the augmented online algorithm is compared to the optimal offline algorithm
with cluster capacity~$k$.
\end{description}

An online repartitioning algorithm has to cope with the following issues:

\begin{description}

\item[Serve remotely or migrate (``rent or buy'')?] For just a brief communication, it may not be worthwhile to collocate the nodes: the migration cost might
be too large in comparison to communication costs.

\item[Where to migrate, and what?]
If an algorithm decides to collocate nodes $x$ and~$y$, the question becomes
how. Should $x$ be migrated to the cluster holding $y$, $y$~to the one holding
$x$, or should both nodes be migrated to a new cluster?

\item[Which nodes to evict?]
There may not exist sufficient space in the desired destination cluster. In
this case, the~algorithm needs to decide which nodes to ``evict'' (migrate to
other clusters), to free up space.

\end{description}

\subsection{Our Contributions}

This paper introduces the online Balanced RePartitioning problem (BRP),
a fundamental \emph{dynamic} variant of the classic graph clustering problem. 
We show that BRP features some interesting connections to other well-known
online graph problems. For $\ell=2$, BRP can simulate the online paging problem,
and for $k=2$, BRP is a~novel online version of maximum matching.
We consider deterministic algorithms and make the following technical
contributions:

\begin{description}

\item[Algorithms for General Variant:]
For the non-augmented variant, in \cref{sec:upper}, we first present a~simple
$O(k^2 \cdot \ell^2)$-competitive algorithm. Our main technical contribution
is an $O((1+1/\eps) \cdot k \log{k})$-competitive deterministic algorithm
$\CREP$ for a setting with $(2+\eps)$-augmentation (\cref{sec:crep}).
We emphasize that this bound does not depend on~$\ell$. This is interesting,
as in many application domains of this problem, $k$ is small: for example, in
our motivating virtual machine collocation problem, a server typically hosts
only a small number of virtual machines (e.g., related to the constant number
of cores on the server).

\item[Algorithms for Online Rematching:]
For the special case of online rematching ($k=2$, but arbitrary~$\ell$), in
\cref{sec:k-two}, we prove that a variant of a greedy algorithm is
7-competitive. We also demonstrate a lower bound of 3 for any deterministic
algorithm.

\item[Lower Bounds:]
By a reduction to online paging, in \cref{sec:paging}, we show that
for two clusters, no deterministic algorithm can obtain a better bound than
$k-1$. While this shows an~interesting link between BRP and paging, in
\cref{sec:lower-bounds}, we present a stronger bound. Namely, we
show that for ${\ell \geq 2}$ clusters, no deterministic algorithm can beat the
bound of $k$ even with an~arbitrary amount of augmentation, as~long~as~the
algorithm cannot keep all nodes in a~single cluster. In contrast, online
paging is known to~become constant-competitive with constant
augmentation~\cite{SleTar85}.

\end{description}

\subsection{A Practical Motivation}

There are many applications to the dynamic graph clustering problem.
To give just one example, we consider server virtualization in
datacenters. Distributed cloud applications, including batch processing
applications such as MapReduce, streaming applications such as Apache Flink or
Apache Spark, and scale-out databases and key-value stores such as Cassandra,
generate a~significant amount of network traffic and a considerable fraction
of their runtime is due to network activity~\cite{MogPop12}. For example,
traces of jobs from a Facebook cluster reveal that network transfers on
average account for 33\% of the execution time~\cite{ChZMJS11}. In such
applications, it is desirable that frequently communicating virtual machines
are \emph{collocated}, i.e., mapped to the same physical server: 
communication across the network (i.e., inter-server communication) induces
network load and latency. However, migrating virtual machines between servers
also comes at a price: the state transfer is bandwidth intensive, and may even
lead to short service interruptions. Therefore the goal is to design online
algorithms that find a good trade-off between the inter-server communication
cost and the migration cost.


\section{Related Work}\label{sec:relwork}

The static offline version of our problem, i.e., a problem variant where
migration is not allowed, where all requests are known in advance, and where
the goal is to find best node assignment to $\ell$ clusters, is known as the
$\ell$-balanced graph partitioning problem. The problem is 
NP-complete, and cannot even be approximated within any finite factor unless P
= NP~\cite{AndRae06}. The static variant where $n/\ell = 2$ corresponds to a
maximum matching problem, which is polynomial-time solvable. The static
variant where $\ell = 2$ corresponds to the minimum bisection problem, which
is already NP-hard~\cite{GaJoSt76}. Its approximation was studied in a long
line of work~\cite{SarVaz95,ArKaKa99,FeKrNi00,FeiKra02,KraFei06,Raec08} and
the current best approximation ratio of $O(\log n)$ was given by
R{\"{a}}cke~\cite{Raec08}. The $O(\log^{3/2} n)$-approximation given by
Krauthgamer and Feige~\cite{KraFei06} can be extended to~general $\ell$, but
the running time becomes exponential in~$\ell$.

The inapproximability of the static variant for general values of $\ell$
motivated research on the bicriteria variant, which can be seen as the offline
counterpart of our cluster-size augmentation approach. Here, the~goal
is~to~develop $(\ell,\delta)$-balanced graph partitioning, where the graph has
to be partitioned into $\ell$ components of~size less than $\delta \cdot (n /
\ell)$ and the cost of the cut is compared to the optimal (non-augmented)
solution where all components are of size $n / \ell$. The variant where
$\delta \geq 2$ was considered in
\cite{LeMaTr90,SimTen97,EvNaRS00,EvNaRS99,KrNaSc09}. So far the best result is
an $O(\!\sqrt{\log n \cdot \log \ell})$-approximation by Krauthgamer et
al.~\cite{KrNaSc09}, which builds on ideas from the $O(\!\sqrt{\log
n})$-approximation algorithm for balanced cuts by Arora et al.~\cite{ArRaVa09}.
For smaller values of $\delta$, i.e., when $\delta = 1 + \eps$ with a fixed
$\eps > 0$, Andreev and R{\"{a}}cke gave an $O(\log^{1.5} n / \eps^2)$
approximation~\cite{AndRae06}, which was later improved to $O(\log n)$ by
Feldmann and Foschini ~\cite{FelFos15}.

The BRP problem considered in this paper was not previously studied. However,
it bears some resemblance to the classic online problems; below we highlight
some of them.

Our model is related to online
paging~\cite{SleTar85,FKLMSY91,McGSle91,AcChNo00}, sometimes also referred to
as online caching, where requests for data items (nodes) arrive over time and
need to be served from a cache of finite capacity, and where the number of
cache misses must be minimized. Classic problem variants usually boil down to
finding a smart eviction strategy, such as Least Recently Used (LRU). In our
setting, requests can be served remotely (i.e., without fetching the
corresponding nodes to a single cluster). In this light, our model is more
reminiscent of caching models \emph{with
bypassing}~\cite{EpImLN11,EpImLN15,Irani02}. Nonetheless, we show that BRP is
capable of emulating online paging.

The BRP problem is an example of a non-uniform problem~\cite{KaMaMO94}: the
cost of changing the state is higher than the cost of serving a single
request. This requires finding a~good trade-off between serving requests
remotely (at a low but repeated communication cost) or migrating nodes into a
single cluster (entailing a potentially high one-time cost). Many
online problems exhibit this so called \emph{rent-or-buy} property, e.g., ski
rental problem~\cite{KaMaMO94,LoPaRa08}, relaxed metrical task
systems~\cite{BaChIn01}, file migration~\cite{BaChIn01,BiByMu17}, distributed
data management~\cite{BaFiRa95,AwBaFi93,AwBaFi98}, or rent-or-buy network
design~\cite{AwAzBa04,Umboh15,FeWiLe16}.

There are two major differences between BRP and the problems listed above.
First, these problems typically maintain some configuration of servers or
bought infrastructure and upon a new request (whose cost typically depends on
the distance to the infrastructure), decide about its reconfiguration (e.g.,
server movement or purchasing additional links). In contrast, in our model,
\emph{both} end-points of a communication request are subject to optimization.
Second, in the BRP problem a request reveals only very limited information
about the optimal configuration to serve it: There exist relatively long
sequences of requests that can be served with zero cost from a fixed
configuration. Not only can the set of such configurations be very large, but
such configurations may also differ significantly from each other.

Since the initial conference publication of this paper~\cite{disc16}, two
relaxations of the model were considered. First, Avin et al.~\cite{computing18}
studied a variant where requests are chosen randomly according to a probability
distribution fixed by an adversary. Under some additional assumptions, they gave
an algorithm which achieves logarithmic competitive ratio with high probability.
Second, Henzinger et al.~\cite{sigmetrics19learn} considered a~``learning''
variant, where requests correspond to edges of a graph that can be perfectly
partitioned (without inter-cluster edges). Note that for such setting, there
exists a static partitioning of zero cost, and thus the goal of an algorithm is
to ``learn'' such partitioning. The authors presented a distributed online
algorithm whose cost is asymptotically almost optimal, and show how to apply
this solution to a distributed union-find problem.


\section{A Simple Upper Bound}
\label{sec:upper}

As a warm-up and to present the model, we start with a straightforward $O(k^2
\cdot \ell^2)$-competitive deterministic algorithm \DET. At any time, \DET
serves a request, adjusts its internal structures (defined below)
accordingly and then possibly migrates some nodes. \DET operates in phases, and each
phase is analyzed separately. The first phase starts with the first request.

In a single phase, \DET maintains a helper structure: a complete graph on all
$\ell \cdot k$ nodes, with an edge present between each pair of nodes. We say
that a communication request is \emph{paid} (by \DET) if it occurs between
nodes from different clusters, and thus entails a cost for \DET. For each edge
between nodes $x$ and $y$, we define its weight~$w(x,y)$ to be the number of
paid communication requests between $x$ and~$y$ since the beginning of~the~current phase.

Whenever an edge weight reaches $\alpha$, it is called \emph{saturated}. If a
request causes the corresponding edge to~become saturated,
\DET computes a new placement of nodes (potentially for all of them), so that all
saturated edges are inside clusters (there is only one new saturated edge). If
this is not possible, node positions are not changed, the~current phase ends
with the current request, and a new phase begins with the next request. Note
that all edge weights are reset to zero at the beginning of a phase.

\begin{theorem}
\DET is $O(k^2 \cdot \ell^2)$-competitive.
\end{theorem}

\begin{proof}
We bound the costs of \DET and \OPT in a single phase. First, observe that
whenever an~edge weight reaches $\alpha$, its endpoint nodes will be collocated 
until the end of the phase, and therefore its weight is not
incremented anymore. Hence the weight of any edge is at most $\alpha$.

Second, observe that the graph induced by saturated edges always constitutes 
a~forest. Suppose that, at a time $t$,
two nodes $x$ and~$y$, which are not
connected by a saturated edge, become connected by a path of saturated edges.
From that time onward, \DET stores them in a single cluster. Hence, the
weight~$w(x,y)$ cannot increase at subsequent time points, and $(x,y)$ may
not become saturated. The forest property implies that the number of saturated
edges is smaller than $k \cdot \ell$.

The two observations above allow us to bound the cost of \DET in a single
phase. The number of reorganizations is at most the number of saturated edges,
i.e., at most~$k \cdot \ell$. As the cost associated with a single
reorganization is $O(k \cdot \ell \cdot \alpha)$, the total cost of all node
migrations in a single phase is at most $O(k^2 \cdot \ell^2 \cdot \alpha)$.
The communication cost itself is equal to the total weight of all edges, and
by the first observation, it is at most $\binom{k \cdot \ell}{2}
\cdot \alpha < k^2 \cdot \ell^2 \cdot \alpha$. Hence, for any phase $P$ (also
for the last one), it holds that $\DET(P) = O(k^2 \cdot \ell^2 \cdot \alpha)$.

Now we lower-bound the cost of \OPT on any phase $P$ but the last one. If \OPT
performs a node swap in $P$, it pays $2 \cdot \alpha$. Otherwise its assignment of
nodes to clusters is fixed throughout $P$. Recall that at the end of $P$, \DET
failed to reorganize the nodes. This means that for any static mapping of the
nodes to clusters (in particular the one chosen by \OPT), there is a
saturated inter-cluster edge. The communication cost over such an~edge incurred
by \OPT is at least $\alpha$ (it can be also strictly greater than~$\alpha$ as
the edge weight only counts the communication requests paid by \DET).

Therefore, the $\DET$-to-$\OPT$ cost ratio in any phase but the last one is at
most $O(k^2 \cdot \ell^2)$ and the cost of \DET on the last phase is at
most $O(k^2 \cdot \ell^2 \cdot \alpha)$. Hence,
$\DET(\sigma) \leq O(k^2 \cdot \ell^2) \cdot \OPT(\sigma) + O(k^2 \cdot
\ell^2 \cdot \alpha)$ for any input $\sigma$.
\end{proof}


\section{Algorithm {\sc Crep}}
\label{sec:crep}

In this section, we present the main result of this paper, a
\emph{Component-based REPartitioning algorithm (\CREP)} which achieves a
competitive ratio of $O((1 + 1/\eps) \cdot k \log k)$ with augmentation
$2+\eps$, for any~$\eps \geq 1/k$ (i.e., the augmented cluster
is of size at least $2k+1$). For technical convenience, we assume that 
$\eps \leq 2$. This assumption is without loss of generality: if the augmentation 
is $2+\eps > 4$, \CREP~simply uses each cluster only up to capacity $4k$.

\CREP maintains a similar graph structure as the
simple deterministic $O(k^2 \cdot \ell^2)$-competitive algorithm \DET from the
previous section, i.e., it keeps counters denoting how many times it paid for a
communication between two nodes. Similarly, at any time~$t$,
\CREP serves the current request, adjusts its internal structures, and then
possibly migrates nodes. Unlike \DET, however, the execution of \CREP is
\emph{not} partitioned into global phases: the reset of counters to zero can
occur at different times.

\subsection{Algorithm Definition}

We describe the construction of \CREP in two stages. The first stage uses
an~intermediate concept of \emph{communication components}, which are groups of at
most $k$ nodes. In the second stage, we show how components are assigned to
clusters, so that all nodes from any single component are always stored in a
single cluster.

\subsubsection{Stage 1: Maintaining Components}

Roughly speaking, nodes are grouped into components if they communicated a lot
recently. At the very beginning, each node is in a singleton component. Once
the cumulative communication cost between nodes distributed across $s$
components exceeds $\alpha \cdot (s-1)$, $\CREP$ merges them into a~single
component. If a resulting component size exceeds $k$, it becomes split
into singleton components.

More precisely, the algorithm maintains a time-varying \emph{partition of all
nodes into components}. As a helper structure, \CREP keeps a complete graph on
all $k \cdot \ell$ nodes, with an edge present between each pair of nodes. For
each edge between nodes $x$ and~$y$, \CREP maintains its weight $w(x,y)$. We
say that a communication request is \emph{paid} (by \CREP) if it occurs
between nodes from different clusters, and thus entails a~cost for \CREP. If
$x$ and $y$ belong to the same component, then $w(x,y) = 0$. Otherwise,
$w(x,y)$ is equal to the number of paid communication requests between $x$
and~$y$ since the last time when they were placed in \emph{different components} 
by \CREP. It is worth emphasizing that during an~execution of \CREP, it is
possible that $w(x,y) > 0$ even when $x$ and $y$ belong to~the~same cluster.

For any subset of components $\set = \{ C_1, C_2, \ldots, C_{|\set|} \}$ (called
\emph{component-set}), by~$w(\set)$ we denote the total weight of all edges
between nodes of $\set$. Note that positive weight edges occur only between
different components of~$\set$. We call a component-set \emph{trivial} if it
contains only one component; $w(\set) = 0$ in this case.

Initially, all components are singleton components and all edge weights are
zero. At time $t$, upon a~communication request between a pair of nodes $x$
and $y$, if $x$ and $y$ lie in the same cluster, the corresponding cost is~$0$
and \CREP does nothing. Otherwise, the cost entailed to \CREP is $1$, nodes
$x$ and $y$ lie in different clusters (and hence also in different
components), and the following updates of weights and components are
performed.

\begin{enumerate}

\item \emph{Weight increment.} Weight $w(x,y)$ is incremented.

\item \emph{Merge actions.} We say that a non-trivial component-set $\set = \{
C_1, \ldots, C_{|\set|} \}$ is \emph{mergeable} if $w(\set) \geq
(|\set|-1) \cdot \alpha$. If a mergeable component-set $S$ exists, then all its
components are merged into a single one. If multiple mergeable component-sets
exist, \CREP picks the one with maximum number of components, breaking ties
arbitrarily. Weights of all intra-$\set$ edges are reset to zero, and thus
intra-component edge weights are always zero. A mergeable set $\set$ induces
a~sequence of $|\set|-1$ \emph{merge actions}:
\CREP iteratively replaces two arbitrary components 
from~$\set$ by a component being their union (this constitutes a single merge
action).

\item \emph{Split action.} If the final component resulting from merge action(s)
has more than $k$ nodes, it is split into singleton
components. Note that weights of edges between these singleton components are
all zero as they have been reset by the preceding merge actions.

\end{enumerate}

We say that merge actions are \emph{real} if they are not followed
by a split action (at the same time point) and \emph{artificial} otherwise.

\subsubsection{Stage 2: Assigning Components to Clusters}

At time $t$, \CREP processes a communication request and recomputes components
as described in the first stage. Recall that we require that nodes of a single
component are always stored in a~single cluster. To maintain this property for
artificial merge actions, no actual migration is necessary. The property may
however be violated by real merge actions. Hence, in the following, we assume
that in the first stage \CREP found a mergeable component set $\set = \{ C_1, 
\ldots, C_{|\set|} \}$ that triggers $|\set|-1$ merge actions not 
followed by a split action.

\CREP consecutively processes each real merge action by migrating some nodes.
We describe this process for a single real merge action involving two
components $C_x$ and $C_y$. As a split action was not executed, $|C_x| +
|C_y| \leq k$, where $|C|$ denotes the number of component $C$ nodes.
Without loss of generality, $|C_x| \leq |C_y|$.

We may assume that $C_x$ and $C_y$ are in different clusters as otherwise
\CREP does nothing. If the cluster containing $C_y$ has $|C_x|$ free space,
then $C_x$ is migrated to this cluster. Otherwise, \CREP finds a cluster that
has at~most $k$ nodes, and moves both $C_x$ and $C_y$ there. We call the
corresponding actions \emph{component migrations}. By an~averaging argument,
there always exists a cluster that has at most $k$ nodes, and hence, with
$(2+\eps)$-augmentation, component migrations are always feasible.

\subsection{Analysis: Structural Properties}

We start with a structural property of components and edge weights.
The property states that immediately after \CREP merges (and
possibly splits) a component-set, no other component-set is~mergeable. This
property holds independently of the actual placement of components in
particular clusters.

\begin{lemma}
\label{lem:wS_bound}
At any time $t$, after \CREP performs all its actions,
$w(\set) < \alpha \cdot (|\set|-1)$ for any non-trivial component-set $\set$.
\end{lemma}

\begin{proof}
We prove the lemma by an induction on steps. Clearly, the lemma holds before an
input sequence starts as then $w(\set) = 0 \leq \alpha - 1 < \alpha \cdot
(|\set|-1)$ for any non-trivial set $\set$. We assume that it holds at time $t-1$
and show it for time $t$.

At time $t$, only a single weight, say $w(x,y)$, may be incremented. If after
the increment, \CREP does not merge any component, then clearly $w(\set) < \alpha
\cdot (|\set|-1)$ for any non-trivial set $\set$. Otherwise, at time $t$, \CREP
merges a~component-set $\A$ into a new component $C_\A$, and then possibly
splits $C_\A$ into singleton components. We show that
the lemma statement holds then for any non-trivial component-set $\set$. We
consider three cases.

\begin{enumerate}

\item Component-sets $\A$ and $\set$ do not share any common node. Then, $\A$ and
$\set$ consist only of components that were present already right before time~$t$
and they are all disjoint. The edge~$(x,y)$ involved in~communication at time
$t$ is contained in~$\A$, and hence does not contribute to the weight of
$w(\set)$. By the inductive assumption, the inequality 
$w(\set) < \alpha \cdot (|\set|-1)$ held right
before time $t$. As $w(\set)$ is not affected by \CREP's actions at step $t$, the
inequality holds also right after time $t$.

\item \CREP does not split $C_\A$ and $C_\A \in \set$. Let $\X = \set \setminus
\{C_\A\}$. Let $w(\A,\X)$ denote the total weight of all edges with one endpoint
in $\A$ and another in~$\X$. Recall that
\CREP always merges a~mergeable component-set with maximum number of components. 
As \CREP merged component-set~$\A$ and did not merge
(larger) component-set $\A \uplus \X$, $\A$ was mergeable ($w(\A) \geq \alpha \cdot
(|\A|-1)$), while $\A \uplus \X$ was not, i.e., $w(\A) + w(\A,\X) + w(\X) = w(\A
\uplus \X) < \alpha \cdot (|\A|+|\X|-1)$. Therefore, $w(\A,\X) + w(\X) < \alpha
\cdot |\X|$ right after weight $w(x,y)$ is incremented at time~$t$. Observe
that when component-set $\A$ is merged and all intra-$\A$ edges have their weights 
reset to zero, neither $w(\A,\X)$ nor $w(\X)$ is affected.
Therefore after \CREP merges $\A$ into $C_\A$, $w(\set) =
w(\A,\X) + w(\X) < \alpha \cdot |\X| = \alpha \cdot (|\set| - 1)$.

\item \CREP splits $C_\A$ into singleton components $B_1, B_2, \ldots, B_r$ 
and some of these components belong to~set $\set$. This time, we define $\X$ to be
the subset of $\set$ not containing these components ($\X$ might be also an empty set). In~the
same way as in the previous case, we may show that $w(\A,\X) + w(\X) < \alpha
\cdot |\X|$ after \CREP performs all its operations at time $t$. Hence, at this
time $w(\set) \leq w(\A,\X) + w(\X) < \alpha \cdot |\X| \leq \alpha \cdot (|\set|-1)$.
The last inequality follows as $\set$ has strictly more components than $\X$.
\qed
\end{enumerate}
\end{proof}

Since only one request is given at a time, and since all weights and $\alpha$
are integers, \cref{lem:wS_bound} immediately implies the following
result.

\begin{corollary}
\label{cor:mergeable_sets} Fix any time $t$ and consider weights right after
they are updated by \CREP, but before \CREP performs merge actions. Then,
$w(\set) \leq (|\set|-1) \cdot \alpha$ for any component-set $\set$. In particular,
$w(\set) = (|\set|-1) \cdot \alpha$ for a mergeable component-set~$\set$.
\end{corollary}

\subsection{Analysis: Overview}

In the remaining part of the analysis, we fix an~input sequence $\sigma$ and
consider a set $\spl(\sigma)$ of all components that are split by \CREP, i.e.,
components that were created by merge actions, but because of their size they
were immediately split into singleton components. Our goal is to compare both
the cost of \OPT and \CREP to $\sum_{C \in \spl(\sigma)} |C|$. 
Below, we provide the main intuitions for our approach.

For each component $C \in \spl(\sigma)$, we may track the history of how it was
created. This history corresponds to a tree $\T(C)$ whose root is $C$, the leaves are the singleton components containing nodes of $C$, and the internal nodes correspond
to components that are created by merging their children. Note that for any
two sets in $\spl(\sigma)$ their trees contain disjoint subsets of components.
Hence, for any $C \in \spl(\sigma)$, we want to relate the costs of \OPT and
\CREP due to processing, 
to the requests related to the components of $\T(C)$. (Some
components may not belong to any tree, but the related cost can be universally
bounded by a constant independent of input $\sigma$.)

In \cref{sec:opt_lower}, we lower-bound the cost of \OPT. Assume first that \OPT
does not migrate nodes. Fix any component $C \in \spl(\sigma)$. As its size is
greater than $k$, it spans $\Omega(|C|/k)$ clusters in the solution of \OPT. Note
that \cref{cor:mergeable_sets} lower-bounds the number of requests between
siblings in $\T(C)$. Then, for any assignment of nodes of $C$ to 
the clusters, $\Omega(|C| \cdot \alpha /k)$ requests are between clusters.
Additionally, if \OPT migrates nodes, then the amount of
request-related cost that \OPT saves, is dominated by the migration cost. In total,
the cost of \OPT related to~$\T(C)$ is at least $\Omega(|C| \cdot
\alpha / k)$.

In \cref{sec:crep_upper} and \cref{sec:crep_ratio}, we upper-bound the cost of
\CREP. Its request cost is asymptotically dominated by its migration cost, and
hence it is sufficient to bound the latter. If \CREP was always able to migrate
the smaller component to the cluster holding the larger component, then the
total migration cost related to components from $\T(C)$ could be bounded by
$\Omega(|C| \cdot \alpha \cdot \log k)$. (This bound is easy to observe when
$\T(C)$ is a fully balanced binary tree and all merged components are of equal
size.) Unfortunately, \CREP may sometimes need to migrate both components. 
However, if such migrations are expensive, then the
distribution of nodes in clusters becomes significantly more even. Consequently, 
the cost of expensive migrations can be charged to the cost of other migrations,
at the expense of an~extra $O(1 + 1/\eps)$ factor in the cost.
In total, the (amortized) cost of \CREP related to $\T(C)$ is at most
$\Omega((1+1/\eps) \cdot |C| \cdot \alpha \cdot \log k)$.

Finally, comparing bounds on \CREP and \OPT yields the desired bound on the
competitive ratio.

\subsection{Analysis: Lower Bound on OPT}
\label{sec:opt_lower}

In our analysis, we conceptually
replace any swap of two nodes performed by \OPT into two migrations of~the~corresponding nodes.

For any component $C$ maintained by \CREP, let $\tau(C)$ be the time of its
creation. A~non-singleton component~$C$ is created at $\tau(C)$ by the merge
of a component-set, henceforth denoted by~$\set(C)$. For a singleton component,
$\tau(C)$ is the time when the component that previously contained the sole
node of $C$ was split; $\tau(C) = 0$ if $C$ existed at the beginning of
input~$\sigma$. We use time $0$ as an artificial time point that occurred
before an actual input sequence.

For a non-singleton component $C$, we 
define $\F(C)$ as the set of the following (node, time) pairs:
\begin{align*}
	\F(C) = \biguplus_{B \in \set(C)} \{ B \}  \times \left\{ \tau(B)+1, \ldots, \tau(C) \right\}.
\end{align*}
Intuitively, $\F(C)$ tracks the history of all nodes of $C$ from the time 
(exclusively) they started belonging to some previous component $B$, until the time 
(inclusively) they become members of $C$. Note that for any two components 
$C_1,C_2$, sets~$\F(C_1)$ and $\F(C_2)$ are disjoint.
The union of all $\F(C)$ (over all components $C$) 
cover all possible node-time pairs (except for time zero). 

For a given component $C$, we say that a communication request between nodes
$x$ and $y$ at time~$t$ \emph{is contained} in $\F(C)$ if both $(x,t) \in \F(C)$
and $(y,t) \in \F(C)$. Note that only the requests contained in~$\F(C)$
could contribute towards later creation of~$C$ by \CREP. In fact, by
\cref{cor:mergeable_sets}, the number of these requests that
entailed an~actual cost to~\CREP is exactly $(|\set(C)| - 1) \cdot \alpha$.

We say that a migration of node $x$ performed by \OPT at time $t$ \emph{is
contained} in~$\F(C)$ if $(x,t) \in \F(C)$. For any component $C$, we define
$\OPT(C)$ as the cost incurred by \OPT due to requests contained in~$\F(C)$, 
plus the cost of~\OPT migrations contained in~$\F(C)$. The total cost of \OPT
can then be lower-bounded by the sum of $\OPT(C)$ over all components $C$.
(The cost of \OPT can be larger as $\sum_C \OPT(C)$ does not account for 
communication requests not contained in $\F(C)$ for any component~$C$.)

\begin{lemma}
\label{lem:merge_action_cut}
Fix any component $C$ and partition $\set(C)$ into a set of $g \geq 2$ disjoint
component-sets $\set_1, \set_2, \ldots, \set_g$. The number of communication requests
in $\F(C)$ that are between sets $\set_i$ is at least $(g-1) \cdot \alpha$.
\end{lemma}

\begin{proof}
Let $w$ be the weight measured right after its increment at time $\tau(C)$.
Observe that the number of all communication requests from $\F(C)$ that were
between sets $\set_i$ and that \emph{were paid} by \CREP is $w(\set(C)) -
\sum_{i=1}^g w(\set_i)$. It suffices to show that this amount is at least $(g-1)
\cdot \alpha$. By \cref{cor:mergeable_sets}, $w(\set(C)) = (|\set(C)|-1)
\cdot \alpha$ and $w(\set_i) \leq (|\set_i|-1) \cdot \alpha$. Therefore, $w(\set(C)) -
\sum_{i=1}^g w(\set_i) \geq (|\set(C)|-1) \cdot \alpha - \sum_{i=1}^g (|\set_i|-1)
\cdot \alpha = (g-1) \cdot \alpha$.
\end{proof}

For any component $C$ maintained by \CREP, let $Y_C$ denote the set of clusters
containing nodes of $C$ in the solution of \OPT after \OPT performs its
migrations (if any) at time~$\tau(C)$. In particular, if $\tau(C) = 0$, then
$Y_C$ consists of only one cluster that contained the sole node of $C$ 
at the beginning of an input sequence.

\begin{lemma}
\label{lem:opt_recursive_bound}
For any non-trivial component $C$, it holds that $\OPT(C) \geq (|Y_C| - 1)
\cdot \alpha - \sum_{B \in \set(C)} (|Y_B| - 1) \cdot \alpha$.
\end{lemma}

\begin{proof}
Fix a component $B \in \set(C)$ and any node $x \in B$. Let $\optmig(x)$ be the
number of \OPT migrations of~node $x$ at times $t \in \{ \tau(B)+1, \ldots,
\tau(C) \}$. Furthermore, let $Y'_x$ be the set of clusters that
contained $x$ at some moment of a~time $t \in \{ \tau(B)+1, \ldots, \tau(C)
\}$ (in the solution of \OPT). We extend these notions to components:
$\optmig(B) = \sum_{x \in B} \optmig(x)$ and $Y'_B = \bigcup_{x \in B} Y'_x$.
Observe that $|Y'_B| \leq |Y_B| + \optmig(B)$.
We say that $Y'_B$ are the clusters that were \emph{touched} by component $B$ 
(in the solution of \OPT).

By \cref{cor:mergeable_sets}, the number of communication requests between
components of~$\set(C)$ is $(|\set(C)|-1) \cdot \alpha$. However, $\OPT(C)$
includes the cost only due to these requests that are between different clusters.
Hence, to lower-bound $\OPT(C)$, we aggregate components of~$\set(C)$ into
component-sets called \emph{bundles}, so that any two bundles have their nodes
always in disjoint clusters. This way, any communication between nodes from
different bundles incurs a cost to \OPT.

The bundles with the desired property can be created by a natural iterative 
process. We start from $|\set(C)|$ bundles, each containing just a single 
component from~$\set(C)$. Then, we iterate over all clusters touched 
by any component of $\set(C)$, i.e., over all clusters from $\bigcup_{B \in \set(C)} Y'_B$.
For each such cluster $V$, let $H_V$ be the set 
of all components of $\set(C)$ that touched $V$. We then aggregate all bundles 
containing any component from $H_V$ into a single bundle.

On the basis of this construction, we may lower-bound the number of 
bundles. Initially, we have $|\set(C)|$ bundles. When we process a cluster $V \in 
\bigcup_{B \in \set(C)} Y'_B$, we aggregate at most $|H_V|$ bundles, and thus
the total number of bundles drops at most by $|H_V|-1$. Therefore, the final number of bundles 
is 
\begin{align*}
	p \,
	\geq &\; \textstyle |\set(C)| - \sum_{V \in \bigcup_{B \in \set(C)} Y'_B} (|H_V|-1) \\
	= &\; \textstyle |\bigcup_{B \in \set(C)} Y'_B| + \textstyle |\set(C)| - \sum_{V \in \bigcup_{B \in \set(C)} Y'_B} |H_V| \\
	= &\; \textstyle |\bigcup_{B \in \set(C)} Y'_B| + \textstyle |\set(C)| - \sum_{B \in \set(C)} |Y'_B| \\
	= &\; \textstyle |\bigcup_{B \in \set(C)} Y'_B| - \sum_{B \in \set(C)} (|Y'_B| - 1) \\
	\geq &\; \textstyle |Y_C| - \sum_{B \in \set(C)} (|Y'_B| - 1) \\
	\geq &\; \textstyle |Y_C| - \sum_{B \in \set(C)} (|Y_B|-1) - \sum_{B \in \set(C)} \optmig(B),
\end{align*}
where the second inequality follows as $Y_C \subseteq \bigcup_{B \in \set(C)} Y'_B$.

By \cref{lem:merge_action_cut}, the number of communication requests in
$\F(C)$ that are between different bundles is at least $(p-1) \cdot \alpha$,
and each of these requests is paid by \OPT.
Additionally, $\OPT(C)$ involves $\sum_{B \in \set(C)}
\optmig(B)$ node migrations in $\F(C)$, and therefore $\OPT(C) \geq (p-1) \cdot
\alpha + \sum_{B \in \set(C)} \optmig(B) \cdot \alpha
\geq (|Y_C|-1) \cdot \alpha - \sum_{B \in \set(C)} (|Y_B|-1) \cdot \alpha$.
\end{proof}

\begin{lemma}
\label{lem:opt_lower_bound}
For any input $\sigma$, $\OPT(\sigma) \geq \sum_{C \in \spl(\sigma)}
|C| / (2k) \cdot \alpha$.
\end{lemma}

\begin{proof}
Fix any component $C \in \spl(\sigma)$. Recall that $\T(C)$ is a tree
describing how $C$ was created: the leaves of $\T(C)$ are singleton
components containing nodes of $C$, the root is $C$ itself, and each internal
node corresponds to a~component created at a~specific time, by merging its
children.

We now sum $\OPT(B)$ over all components $B$ from $\T(C)$, including 
the root $C$ and the leaves $L(\T(C))$. The~lower bound given 
by \cref{lem:opt_recursive_bound} sums telescopically, i.e.,
\begin{align*}
	\textstyle \sum_{B \in \T(C)} \OPT(B) \;
		\geq &\; \textstyle (|Y_C|-1) \cdot \alpha - \sum_{B \in L(\T(C))} (|Y_B|-1) \cdot \alpha \\
		= &\; \textstyle (|Y_C|-1) \cdot \alpha,
\end{align*}
where the equality follows as any $B \in L(\T(C))$ is a singleton component,
and therefore $|Y_B| = 1$. As $C$ has $|C|$ nodes, it has to span at least
$\lceil |C|/k \rceil$ clusters of \OPT, and therefore $\sum_{B \in \T(C)}
\OPT(B) \geq (\lceil |C|/k \rceil -1) \cdot \alpha \geq |C| / (2 k) \cdot
\alpha$, where the second inequality follows because $C \in \spl(\sigma)$ and
thus $|C| > k$.

The proof is concluded by observing that, for any two components $C_1$
and~$C_2$ from~$\spl(\sigma)$, the corresponding trees~$\T(C_1)$ and $\T(C_2)$ do not share common
components, and therefore $\OPT(\sigma) \geq \sum_{C \in \spl(\sigma)} \sum_{B
\in \T(C)} \OPT(B) \geq \sum_{C \in \spl(\sigma)} |C| / (2k) \cdot \alpha$.
\end{proof}

\subsection{Analysis: Upper Bound on CREP}
\label{sec:crep_upper}

To bound the cost of \CREP, we fix any input $\sigma$ and introduce the following 
notions. Let~$M(\sigma)$ be the sequence of merge actions 
(real and artificial ones) performed by \CREP. 
For any real merge action $m \in M(\sigma)$, by
$\size(m)$ we denote the size of the smaller component that was
merged. For an~artificial merge action, we set $\size(m) = 0$.

Let $\final(\sigma)$ be the set of all components that exist when \CREP finishes
sequence~$\sigma$. Note that $w(\final(\sigma))$ is the total weight of all
edges after processing $\sigma$. We split 
$\CREP(\sigma)$ into two parts: the cost of serving requests, $\CREPreq(\sigma)$, 
and the cost of node migrations, $\CREPmig(\sigma)$. 

\begin{lemma}
\label{lem:crep_req}
For any input $\sigma$, $\CREPreq(\sigma) = |M(\sigma)| \cdot \alpha + w(\final(\sigma))$.
\end{lemma}

\begin{proof}
The proof follows by an induction on all requests of $\sigma$. Whenever \CREP
pays for the communication request, the corresponding edge weight is incremented
and both sides increase by $1$. At a time when $s$ components are merged, $s-1$
merge actions are executed and, by \cref{cor:mergeable_sets}, the sum of all
edge weights decreases exactly by $(s-1) \cdot \alpha$. Then, the value of both
sides remain unchanged.
\end{proof}

\begin{lemma}
\label{lem:crep_mig}
For any input $\sigma$, with $(2+\eps)$-augmentation, 
$\CREPmig(\sigma) \leq (1+4/\eps) \cdot \alpha \cdot \sum_{m \in M(\sigma)} \size(m)$.
\end{lemma}

\begin{proof}
If $\CREP$ has more than $2 k$ nodes in cluster $V_i$ (for $i \in
\{1,\ldots,\ell\}$), then we call the excess $|V_i| - 2 k$ the \emph{overflow} of $V_i$;
otherwise, the overflow of $V_i$ is zero. We denote the overflow of cluster
$V_i$ measured right after processing sequence~$\sigma$ by $\ovr^\sigma(V_i)$.
It is sufficient to show the following relation for any sequence $\sigma$:
\begin{equation}
\label{eq:crep_migrations}
 \CREPmig(\sigma) \;+\; \sum_{j=1}^\ell (4/\eps) \cdot \alpha \cdot \ovr^\sigma(V_j) 
 \; \leq \; (1+4/\eps)  \cdot \alpha \cdot \sum_{m \in M(\sigma)} \size(m).
\end{equation}
As the second summand of \eqref{eq:crep_migrations} is always non-negative,
\eqref{eq:crep_migrations} will imply the lemma. 
In other words, the lemma will be shown using amortized analysis, where
the amount $\sum_{j=1}^\ell (4/\eps) \cdot \alpha \cdot \ovr^\sigma(V_j)$ serves 
as a potential function.

The proof of \eqref{eq:crep_migrations} follows by an induction on all
requests in $\sigma$. Clearly, \eqref{eq:crep_migrations} holds trivially at the
beginning, as there are no overflows, and thus both sides of
\eqref{eq:crep_migrations} are zero. Assume that \eqref{eq:crep_migrations}
holds for a~sequence $\sigma$ and we show it for sequence $\sigma \cup \{ r \}$,
where $r$ is some request.

We may focus on a request $r$ which triggers the migration of
component(s), as otherwise
\eqref{eq:crep_migrations} holds trivially. Such a migration is triggered by a
real merge action $m$ of two components $C_x$ and $C_y$. We assume that $|C_x|
\leq |C_y|$, and hence $\size(m) = |C_x|$. Note that $|C_x| + |C_y| \leq k$,
as otherwise the resulting component would be split and no migration would
be performed.

Let $V_x$ and $V_y$ denote the cluster that held components $C_x$ and $C_y$,
respectively, and $V_z$ be the destination cluster for $C_x$ and $C_y$ (it is
possible that $V_z = V_y$). For any cluster~$V$, we denote the change in 
its overflow by $\Delta \ovr(V) = \ovr^{\sigma \cup \{r\}}(V) - \ovr^\sigma(V)$. 
It suffices to show that the
change of the left hand side of \eqref{eq:crep_migrations} is at most
the~increase of its right hand side, i.e.,
\begin{equation}
\label{eq:crep_mig_inc}
\CREPmig(r) \;+\; \sum_{V \in \{ V_x, V_y, V_z \}} (4/\eps) \cdot \alpha \cdot \Delta\ovr(V) 
\;\leq\; (1+4/\eps) \cdot |C_x| \cdot \alpha.
\end{equation}
For the proof, we consider three cases.

\begin{enumerate}
\item 
$V_y$ had at least $|C_x|$ empty slots. In this case, \CREP simply migrates
$C_x$ to $V_y$ paying $|C_x| \cdot \alpha$. Then, $\Delta \ovr(V_x) \leq 0$,
$\Delta \ovr(V_y) \leq |C_x|$, $V_z = V_y$, and thus \eqref{eq:crep_mig_inc}
follows.

\item 
$V_y$ had less than $|C_x|$ empty slots and $|C_y| \leq (2/\eps) \cdot |C_x|$.
\CREP migrates both $C_x$ and $C_y$ to~component $V_z$ and the incurred cost is 
$\CREPmig(r) = (|C_x| + |C_y|) \cdot \alpha \leq (1+2/\eps) \cdot |C_x| \cdot \alpha
< (1 + 4/\eps) \cdot |C_x| \cdot \alpha$. 
It remains to show that the second summand of \eqref{eq:crep_mig_inc} is~at~most $0$. 
Clearly, $\Delta \ovr(V_x) \leq 0$ and $\Delta \ovr(V_y) \leq 0$. 
Furthermore, the number of 
nodes in $V_z$ was at~most~$k$ before the migration by the definition of \CREP,
and thus is at~most $k + |C_x| + |C_y| \leq 2k$ after the migration.
This implies that $\Delta \ovr(V_z) = 0 - 0 = 0$.

\item 
$V_y$ had less than $|C_x|$ empty slots and $|C_y| > (2/\eps) \cdot |C_x|$.
As in the previous case, \CREP migrates $C_x$ and $C_y$ to~component $V_z$, 
paying $\CREPmig(r) = (|C_x| + |C_y|) \cdot \alpha < 2 \cdot |C_y| \cdot \alpha$.
This time, $\CREPmig(r)$ can be much larger than the right hand side 
of~\eqref{eq:crep_mig_inc}, and thus we resort to showing that 
the second summand~of~\eqref{eq:crep_mig_inc} is at most $ - 2 \cdot |C_y| \cdot \alpha$.

As in the previous case, $\Delta \ovr(V_x) \leq 0$ and $\Delta \ovr(V_z) = 0$. 
Observe that $|C_x| < (\eps / 2) \cdot |C_y| \leq (\eps / 2) \cdot k$.
As the migration of $C_x$ to $V_y$ was not possible, the initial number
of nodes in $V_y$ was greater than $(2 + \eps) \cdot k - |C_x| \geq (2+\eps/2) \cdot k$,
i.e., $\ovr^\sigma(V_y) \geq (\eps/2) \cdot k \geq (\eps/2) \cdot |C_y|$. 
As component $C_y$ was migrated out of $V_y$, the number of overflow nodes in $V_y$ changes by
\[
	\Delta \ovr(V_y) 
		= - \min \left\{ \,\ovr^\sigma(V_y), \,|C_y| \,\right\}
		\leq - (\eps/2) \cdot |C_y|.
\]
In the inequality above, we used $\eps \leq 2$.
Therefore, the second summand of~\eqref{eq:crep_mig_inc} is at most 
$(4/\eps) \cdot \alpha \cdot \Delta\ovr(V_y) \leq - 
(4/\eps) \cdot \alpha \cdot (\eps/2) \cdot |C_y|
= - 2 \cdot |C_y| \cdot \alpha$ as desired.
\end{enumerate}
\end{proof}


\subsection{Analysis: Competitive Ratio}
\label{sec:crep_ratio}

In the previous two subsections, we related $\OPT(\sigma)$ to the total
size~of components that are split by \CREP
(cf.~\cref{lem:opt_lower_bound}) and $\CREP(\sigma)$ to $\sum_{m \in
M(\sigma)} \size(m)$, where the latter amount is related to the merging
actions performed by \CREP (cf.~\cref{lem:crep_mig}). Now we link
these two amounts. Note that each split action corresponds to preceding
merge actions that led to the creation of the split component.

\begin{lemma}
\label{lem:bounding_merges}
For any input $\sigma$, it holds that 
$\sum_{m \in M(\sigma)} \size(m) 
	\leq \sum_{C \in \spl(\sigma)} |C| \cdot \log k +
	\sum_{C \in \final(\sigma)} |C| \cdot \log |C|$,
where all logarithms are binary.
\end{lemma}

\begin{proof}
We prove the lemma by an induction on all requests of $\sigma$. At the very
beginning, both sides of the lemma inequality are zero, and hence the induction
basis holds trivially. We assume that the lemma inequality is preserved for a
sequence $\sigma$ and we show it for sequence $\sigma \cup \{ r \}$, where $r$
is an arbitrary request. We may assume that $r$ triggers some merge actions,
otherwise the claim follows trivially.

First, assume $r$ triggered a sequence of real merge actions. We show that the
lemma inequality is preserved after processing each merge action. Consider a
merge action and let $C_x$ and $C_y$ be the components that are merged, with
sizes $p = |C_x|$ and $q = |C_y|$, where $p \leq q$ without loss of generality.
Due to the merge action, the right hand side of the lemma inequality increases
by
\begin{align*}
  (p + q) \cdot \log (p + q) & - p \cdot \log p - q \cdot \log q \\
		= &\; p \cdot (\log (p+q) - \log p) + q \cdot (\log (p+q) - \log q) \\
		\geq &\; p \cdot \log (p+q) / p \\
		\geq &\; p \cdot \log 2 = p.
\end{align*}
As~the~left hand side of the inequality changes exactly by $p$, the inductive
hypothesis holds.

Second, assume $r$ triggered a sequence of artificial merge actions (i.e., followed by a
split action) and let $C_1, C_2, \ldots, C_g$ denote components that were
merged to create a component $C$ that was immediately split. Then, the right
hand side of the lemma inequality changes by $- \sum_{i = 1}^g |C_i| \cdot
\log |C_i| + |C| \cdot \log k
\geq - \sum_{i = 1}^g |C_i| \cdot \log k + |C| \cdot \log k = 0$.
As~the~left hand side of the lemma inequality is unaffected by artificial
merge actions, the inductive hypothesis follows also in this case.
\end{proof}

\begin{theorem}
With augmentation at least $2+\eps$, \CREP is~$O((1 + 1/\eps) \cdot k \cdot \log k)$-competitive.
\end{theorem}

\begin{proof}
Fix any input sequence $\sigma$. 
By \cref{lem:crep_req} and \cref{lem:crep_mig}, 
\begin{align*}
	\CREP(\sigma) 
	= &\; \CREPmig(\sigma) + \CREPreq(\sigma) \\
	\leq &\; \textstyle (1+4/\eps) \cdot \alpha \cdot \sum_{m \in M(\sigma)} \size(m) + |M(\sigma)| \cdot \alpha + w(\final(\sigma)).
\end{align*}

Regarding a bound for $|M(\sigma)|$, we observe the following. First, if \CREP
executes artificial merge actions, then they are immediately followed by a
split action of the resulting component $C$. The number of artificial merge
actions is clearly at most $|C|-1 \leq |C|$, and thus the total number of all
artificial actions in $M(\sigma)$ is at most $\sum_{C \in \spl(\sigma)} |C|$.
Second, if $\CREP$ executes a real merge action $m$, then
$\size(m) \geq 1$. Combining these two bounds yields $|M(\sigma)| \leq \sum_{m
\in M(\sigma)} \size(m) + \sum_{C \in \spl(\sigma)} |C|$. We use this inequality 
and later apply \cref{lem:bounding_merges} to~bound $\sum_{m \in M(\sigma)} 
\size(m)$ obtaining
\begin{align*}
	& \CREP(\sigma) - w(\final(\sigma)) \\
  &\quad \leq  \textstyle (1+4/\eps) \cdot \alpha \cdot \sum_{m \in M(\sigma)} \size(m) 
    + |M(\sigma)| \cdot \alpha \\ 
	&\quad \leq \textstyle (2+4/\eps) \cdot \alpha \cdot \sum_{m \in M(\sigma)} \size(m) 
		+ \alpha \cdot \sum_{C \in \spl(\sigma)} |C|  \\
	&\quad \leq \textstyle (2+4/\eps) \cdot \alpha \cdot \left( 
			\sum_{C \in \spl(\sigma)} |C| \cdot \log k + \sum_{C \in \final(\sigma)} |C| \cdot \log |C|
			\right)
			+ \alpha \cdot \sum_{C \in \spl(\sigma)} |C| \\
	&\quad \leq \textstyle (3+4/\eps) \cdot \alpha \cdot 
			\sum_{C \in \spl(\sigma)} |C| \cdot \log k 
			+ (2+4/\eps) \cdot \alpha \cdot
			\sum_{C \in \final(\sigma)} |C| \cdot \log |C|.
\end{align*}
By \cref{lem:opt_lower_bound}, $\sum_{C \in \spl(\sigma)} |C| \cdot \alpha \leq 
2k \cdot \OPT(\sigma)$. This yields
\begin{align*}
	\CREP(\sigma)
	\leq &\; \textstyle O(1+1/\eps) \cdot k \cdot \log k \cdot \OPT(\sigma) + \beta,
\intertext{where}
	\beta = &\; O(1+1/\eps) \cdot \alpha \cdot
			\sum_{C \in \final(\sigma)} |C| \cdot \log |C|
			+ w(\final(\sigma)).
\end{align*}
To bound $\beta$, observe that the component-set $\final(\sigma)$
contains at most $k \cdot \ell$ components, and hence
by~\cref{lem:wS_bound}, $w(\final(\sigma)) < k \cdot \ell \cdot
\alpha$. Furthermore, the maximum of $\sum_{C \in \final(\sigma)} |C| \cdot
\log |C|$ is achieved when all nodes in a specific cluster constitute a single
component. Thus, $\sum_{C \in \final(\sigma)} |C| \cdot \log |C|
\leq \ell \cdot ((2+\eps) \cdot k) \cdot \log ((2+\eps) \cdot k) = O(\ell
\cdot k \cdot \log k)$.
In total, $\beta = O((1 + 1/\eps) \cdot \alpha \cdot \ell \cdot k \cdot \log k)$, 
i.e., it can be upper-bounded by a constant independent of input sequence $\sigma$,
which concludes the proof.
\end{proof}


\section{Online Rematching}
\label{sec:k-two}

Let us now consider the special case where clusters are of size two ($k=2$,
arbitrary~$\ell$). This can be viewed as an online maximal (re)matching problem:
clusters of size two contain (``match'') exactly one pair of nodes, and
maximizing pairwise communication within each cluster is equivalent to
minimizing inter-cluster communication.

\subsection{Greedy Algorithm}

We define a natural greedy online algorithm~$\GREEDY$, parameterized by a real
positive number $\lambda$. Similarly to our other algorithms,
\GREEDY  maintains an edge weight for each pair of nodes. 
The weights of all edges are initially zero. Weights of intra-cluster edges
are always zero and weights of inter-cluster edges are related to the number
of paid communication requests between edge endpoints. 

When facing an inter-cluster request between nodes $x$
and~$y$, $\GREEDY$ increments the weight $w(e)$, where $e = (x,y)$. Let $x'$
and $y'$ be the nodes collocated with $x$ and~$y$, respectively. If after the
weight increase, it holds that $w(x,y) + w(x',y') \geq \lambda
\cdot \alpha$, $\GREEDY$ performs a swap: it places $x$ and $y$ in one
cluster and $x'$ and~$y'$ in another; afterwards, it resets the weights of
edges $(x,y)$ and $(x',y')$ to 0. Finally, \GREEDY pays for the request
between $x$ and $y$. Note that if the request triggered a migration, then
\GREEDY does not pay its communication cost.

\subsection{Analysis}

We use $E$ to denote the set of all edges.
Let $\MGREEDY$ ($\MOPT$) denote the set of all edges $e = (u,v)$, such 
that $u$ and $v$ are collocated by \GREEDY (\OPT). 
Note that $\MGREEDY$ and $\MOPT$ are perfect matchings on the set of all nodes.

To estimate the total cost of \GREEDY, we use amortized analysis with 
an appropriately defined potential function. First, 
we associate the following edge-potential with any edge $e$:
\[
	\Phi(e) = \begin{cases}
		0 			& \textrm{if $e \in \MGREEDY$,} \\
		- w(e) 	& \textrm{if $e \in \MOPT \setminus \MGREEDY$,} \\
		f \cdot w(e) & \textrm{if $e \notin \MOPT$ and $e \notin \MGREEDY$,} \\
	\end{cases}
\]
where $f \geq 0$ is a constant that will be defined later. 

The union of $\MGREEDY$ and $\MOPT$ constitutes a set of alternating cycles:
an alternating cycle of length~$2 j$ (for some $j \geq 1$) consists of $2 j$
nodes, $j$ edges from $\MGREEDY$ and $j$~edges from $\MOPT$, interleaved. The
case $j = 1$ is~degenerate: such a cycle consists of a single edge from $\MGREEDY
\cap \MOPT$, but we still count it as a cycle of length $2$. 
It turns out that the number of these alternating cycles is a good measure of 
similarity between matchings of \GREEDY and \OPT (when these matchings are 
equal, the number of cycles is maximized). We define the
cycle-potential as
\[
	\Psi = - g \cdot K \cdot \alpha,
\]
where $K$ is the number of all alternating cycles and $g \geq 0$ is a constant that will
be defined later.

To simplify the analysis, we slightly modify the way weights are increased by
\GREEDY. The modification is~applied only when the weight increment triggers 
a~node migration. Recall that this happens when there is an inter-cluster
request between nodes $x$ and $y$. The corresponding weight $w(x,y)$ is then
increased by $1$. After the~increase, it holds that $w(x,y) + w(x',y') \geq
\lambda \cdot \alpha$. (Nodes $x'$ and $y'$ are those collocated with $x$ and $y$,
respectively.) Instead, we increase $w(x,y)$ possibly by a~smaller amount, so
that $w(x,y) + w(x',y')$ becomes \emph{equal} to $\lambda \cdot \alpha$. This
modification allows for a more streamlined analysis and is local: before and
after the modification, \GREEDY performs a migration and right after that, 
it resets weight $w(x,y)$ to zero.

We split the processing of a communication request $(x,y)$ into three stages. In
the first stage, \OPT performs an~arbitrary number of migrations. In the
second stage, the weight $w(x,y)$ is increased accordingly and both \OPT and
\GREEDY serve the request. It is possible that the weight increase triggers a
node swap of \GREEDY, in which case its serving cost is zero. Finally, in the
third stage, \GREEDY may perform a node swap.

We show that for an appropriate choice of $\lambda$, $f$ and $g$, for all
three stages described above the following inequality holds:
\begin{equation}
\label{eq:rematch_bound}
	\textstyle \Delta \GREEDY + \Delta \Psi + \sum_{e \in E} \Delta \Phi(e) \leq 7 \cdot \Delta \OPT.
\end{equation}
Here, $\Delta \GREEDY$ and $\Delta \OPT$ denote the increases of \GREEDY's and
\OPT's cost, respectively. $\Delta \Psi$ and $\Delta \Phi(e)$ are the changes
of the potentials $\Psi$ and $\Phi(e)$. The 7-competitiveness then immediately
follows from summing \eqref{eq:rematch_bound} and bounding the initial and
final values of the potentials. 

\begin{lemma}
\label{lem:opt_swap}
If $2 \cdot (f+1) \cdot \lambda + g \leq 14$, then \eqref{eq:rematch_bound}
holds for the first stage.
\end{lemma}

\begin{proof}
We consider any node swap performed by \OPT. Clearly, for such an event
$\Delta \GREEDY = 0$ and $\Delta \OPT = 2 \cdot \alpha$. The number of cycles
decreases at most by one, and thus $\Delta \Psi \leq g \cdot \alpha$.

We now upper-bound the change in the edge-potentials. Let $\eold_1$ and
$\eold_2$ be the edges that were removed from $\MOPT$ by the swap and let
$\enew_1$ and $\enew_2$ be the edges added to $\MOPT$. For any $i \in
\{1,2\}$, $\Delta \Phi(\enew_i) \leq 0$ as the initial value of
$\Phi(\enew_i)$ is at least $0$ and the final value of $\Phi(\enew_i)$ is at
most $0$. Similarly, $\Delta \Phi(\eold_i) \leq (f+1) \cdot w(\eold_i)$ as the
initial value of $\Phi(\eold_i)$ is at least $-w(\eold_i)$ and the final value
of $\Phi(\eold_i)$ is at most $f \cdot w(\eold_i)$.

Summing up, $\sum_{e \in E} \Delta \Phi \leq (f+1) \cdot (w(\eold_1) +
w(\eold_2)) \leq 2 \cdot (f+1) \cdot \lambda \cdot \alpha$ as the weight of each edge
is at most $\lambda \cdot \alpha$. By combining the bounds above and using the
lemma assumption, we obtain $\Delta \GREEDY + \sum_{e \in E} \Delta \Phi(e) +
\Delta \Psi \leq 0 + 2 \cdot (f+1) \cdot \lambda \cdot \alpha + g \cdot \alpha 
\leq 14 \cdot \alpha = 7 \cdot \Delta \OPT$.
\end{proof}

\begin{lemma}
\label{lem:rematch_req}
If $f \leq 6$, then \eqref{eq:rematch_bound} holds for the second stage.
\end{lemma}

\begin{proof}
In this stage, both \GREEDY and \OPT serve a communication request between
nodes $x$ and $y$. Let $e_c = (x,y)$. As neither \GREEDY nor \OPT migrates any
nodes in this stage, the structure of alternating cycles remains
unchanged, i.e., $\Delta \Psi = 0$. Furthermore, only edge~$e_c$ may change its
weight, and therefore, among all edges, only the edge-potential of $e_c$ may
change. We consider two cases.
\begin{enumerate}

\item If $e_c \in \MGREEDY$, then $\Delta \GREEDY = 0$ and $\Delta \OPT \geq 0$. As
$w(e_c)$ is unchanged, $\Delta \Phi(e_c) = 0$, and therefore 
$\Delta \GREEDY + \Delta \Phi(e_c) = 0 \leq \Delta \OPT$. 

\item If $e_c \notin \MGREEDY$, then let $\Delta w(e_c) \leq 1$ denote the increase 
of the weight of edge~$e_c$. Note that $\Delta \GREEDY \leq \Delta w(e_c)$: 
either no migration is triggered and $\Delta \GREEDY = \Delta w(e_c) = 1$
or a migration is triggered and then \GREEDY does not pay for the request.

If $e_c \in \MOPT$, then $\Delta \OPT = 0$ and $\Delta \Phi(e_c) = -w(e_c)$.
Thus, $\Delta \GREEDY + \Delta \Phi(e_c) \leq 0 = \Delta \OPT$. Otherwise, 
$e_c \notin \MOPT$, in which case 
$\Delta \OPT = 1$. Furthermore, $\Delta \Phi(e_c) = f \cdot \Delta w(e_c)$,
and thus $\Delta \GREEDY + \Delta \Phi(e_c) = (f+1) \cdot \Delta w(e_c) \leq f+1 =
(f+1) \cdot \Delta \OPT$.
\end{enumerate}

Therefore, in the second stage, $\Delta \GREEDY + \Delta \Psi + \sum_{e \in E}
\Delta \Phi(e) \leq (f+1) \cdot \Delta \OPT$,
which implies \eqref{eq:rematch_bound} as we assumed $f \leq 6$.
\end{proof}

\begin{lemma}
\label{lem:greedy_swap}
If $2 + \lambda \leq g \leq f \cdot \lambda - 2$, 
then  \eqref{eq:rematch_bound} holds for the third stage.
\end{lemma}

\begin{proof}
In the third stage (if it is present), $\GREEDY$ performs a swap. Clearly, for
such an event $\Delta \GREEDY = 2 \cdot \alpha$ and $\Delta \OPT = 0$. 

There are four edges involved in a swap: let $(x,x')$ and $(y,y')$ be the
edges that were in $\MGREEDY$ before the swap and let $(x,y)$ and $(y,y')$ be
the new edges in $\MGREEDY$ after the swap. Note that $w(x,x') = w(y,y') = 0$
before and after the swap. By the definition of \GREEDY and our modification
of weight updates, $w(x,y) + w(x',y') = \lambda
\cdot \alpha$ before the swap, and after the swap these weights are reset to
zero.

For any edge $e$, let $\winit(e)$ and $\Phiinit(e)$ denote the weight and the
edge-potential of~$e$ right before the swap. By~the~bounds above, $\Delta
\GREEDY + \sum_{e \in E} \Delta \Phi(e) + \Delta\Psi = 2 \cdot \alpha -
\Phiinit(x,y) - \Phiinit(x',y') + \Delta \Psi$, and hence to show
\eqref{eq:rematch_bound} it suffices to show that the latter amount is at most
$7 \cdot \Delta \OPT = 0$. We consider three cases.

\begin{figure}
\centering
\includegraphics[width=0.95\textwidth]{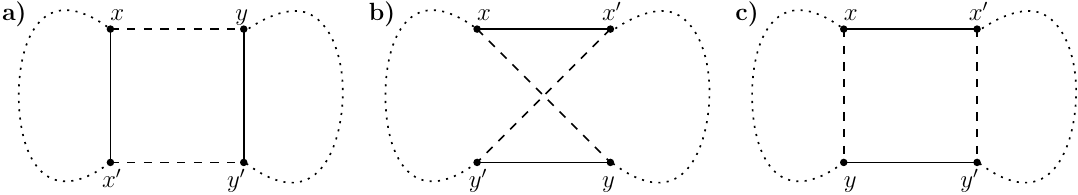}
\caption{Three cases in the analysis of the third stage (a swap performed by
\GREEDY). Solid edges denote edges that were removed from $\MGREEDY$ because
of the swap, dashed ones denote the ones that were added to $\MGREEDY$. Dotted
paths denote the remaining parts of the involved alternating cycle(s).}
\label{fig:rematch}
\end{figure}

\begin{enumerate}

\item
Assume that edges $(x,x')$ and $(y,y')$ were in different alternating
cycles before the swap, see \cref{fig:rematch}a. Then the number
of~alternating cycles decreases by one, and hence $\Delta \Psi = g \cdot
\alpha$. Let $C$ be the cycle that contained edge~$(x,x')$. Node $x$ is adjacent
to an edge from $C$ that belongs to $\MOPT$. (It is possible that this edge is
$(x,x')$; this occurs in the degenerate case when $C$ is of length~$2$.) As
$\MOPT$ is a matching, $(x,y) \notin \MOPT$. Analogously, $(x',y') \notin
\MOPT$. Therefore, $\Phiinit(x,y) + \Phiinit(x',y') = f \cdot w(x,y) + f \cdot
w(x',y') = f \cdot \lambda \cdot \alpha$. Using the lemma assumption,
$\Delta \GREEDY + \sum_{e \in E} \Delta \Phi(e) + \Delta\Psi = (2 + g - f
\cdot \lambda) \cdot \alpha \leq 0$.

\item
Assume that edges $(x,x')$ and $(y,y')$ belonged to the same cycle and
it contained the nodes in the order $x,x',\ldots,y,y',\ldots$,
see~\cref{fig:rematch}b. In this case, it holds that 
$\Delta \Psi = 0$, since 
the number of alternating cycles is unaffected by the swap. By similar
reasoning as in the previous case, neither $(x,y)$ nor $(x',y')$ belong to
$\MOPT$, and thus again, $\Phiinit(x,y) + \Phiinit(x',y') = f \cdot w(x,y) + f
\cdot w(x',y') = f \cdot \lambda \cdot \alpha$. In this case, $\Delta \GREEDY
+ \sum_{e \in E} \Delta \Phi(e) + \Delta\Psi = (2 - f \cdot \lambda) \cdot
\alpha \leq (2 + g - f \cdot \lambda) \cdot \alpha \leq 0$.

\item 
Assume that edges $(x,x')$ and $(y,y')$ belonged to the same cycle and
it contained the nodes in the order $x,x',\ldots,y',y,\ldots$,
see~\cref{fig:rematch}c. When the swap is performed, the number of
alternating cycles decreases, and thus $\Delta \Psi = -g \cdot \alpha$. Unlike
the previous cases, here it is possible that $(x,y)$ belonged to $\MOPT$.
(This happens when $x$ and $y$ were adjacent on the alternating cycle.)
Similarly, it is possible that $(x',y') \in \MOPT$. 
But even in such a case, we may lower-bound the initial values of the
corresponding edge-potentials: $\Phiinit(x,y) + \Phiinit(x',y') \geq -
\winit(x,y) - \winit(x',y') = - \lambda \cdot \alpha$. Using the lemma
assumption, $\Delta \GREEDY + \sum_{e \in E} \Delta \Phi(e) + \Delta\Psi 
\leq (2 - g + \lambda) \cdot \alpha \leq 0$.
\end{enumerate}

\end{proof}

\begin{theorem}
\label{thm:rematching}
For $\lambda = 4/5$, $\GREEDY$ is $7$-competitive.
\end{theorem}

\begin{proof}
We choose $f = 6$ and $g = 14/5$. The chosen values of $\lambda$, $f$ and $g$
satisfy the conditions of \cref{lem:opt_swap}, \cref{lem:rematch_req},
and \cref{lem:greedy_swap}. Summing these
inequalities over all stages occurring while serving an input sequence~$\sigma$
yields
\begin{equation*}
	\textstyle \GREEDY(\sigma) + (\Psi_\textrm{final} - \Psi_\textrm{initial})
	+ \sum_{e \in E} \left( 
		\Phi_\textrm{final}(e) - \Phi_\textrm{initial}(e) \right)
	\leq 7 \cdot \OPT(\sigma),
\end{equation*}
where $\Psi_\textrm{final}$ and $\Phi_\textrm{final}(e)$ denote the final
values of the potentials and $\Psi_\textrm{initial}$ and
$\Phi_\textrm{initial}(e)$ their initial values. We observe that all the
potentials occurring in the inequality above are lower-bounded and
upper-bounded by values that are independent of the input sequence $\sigma$.
That is, $\Psi_\textrm{final} - \Psi_\textrm{initial} \geq - g \cdot \ell
\cdot \alpha$ (as the number of~alternating cycles is at most $\ell$) and
$\Phi_\textrm{final}(e) - \Phi_\textrm{initial}(e) \geq - (f+1) \cdot w(e)
\geq - (f+1) \cdot \lambda \cdot \alpha$ (as all edge weights are always 
at most $\lambda \cdot \alpha$). The number of edges is exactly $\binom{2
\cdot \ell}{2}$, and therefore
\begin{align*}
	 \GREEDY(\sigma) 
	\leq &\; 7 \cdot \OPT(\sigma) 
	+ \textstyle g \cdot \ell \cdot \alpha + \binom{2 \cdot \ell}{2} \cdot (f+1) \cdot 
	\lambda \cdot \alpha \\
	\leq &\; 7 \cdot \OPT(\sigma) 
	+ O(\ell^2 \cdot \alpha).
\end{align*}
This concludes the proof.
\end{proof}


\section{Lower Bounds}
\label{sec:lower}

In order to shed light on the optimality of the presented online algorithm, we
next investigate lower bounds on the competitive ratio achievable by any
(deterministic) online algorithm. We start by showing a reduction of the BRP
problem to online paging, which will imply that already for two clusters the
competitive ratio of the problem is at least $k-1$. We strengthen this bound,
providing a lower bound of $k$ that holds for any amount of augmentation, as
long as the augmentation does not allow to put all nodes in a single 
cluster. The proof uses the averaging argument. We refine this approach for a
special case of online rematching ($k = 2$ without augmentation), for which we
present a~lower bound of $3$.

\subsection{Lower Bound by Reduction to Online Paging}
\label{sec:paging}

\begin{theorem}
Fix any $k$. If there exists a $\gamma$-competitive deterministic algorithm $B$
for BRP for two clusters, each of~size~$k$, then there exists a
$\gamma$-competitive deterministic algorithm $P$ for the paging problem with 
cache size $k-1$ and where the number of different pages is $k$.
\end{theorem}

\begin{proof}
The pages are denoted by $p_1,p_2,\ldots,p_k$. Without loss of generality, we
assume that the initial cache is equal to $p_1,p_2,\ldots,p_{k-1}$. We fix any
input sequence $\sigma^P = (\sigma^P_1, \sigma^P_2, \sigma^P_3, \ldots)$ for the
paging problem, where $\sigma^P_t$ denotes the $t$-th accessed page. We show
how to construct, an online algorithm $P$ for the paging
problem that proceeds in the following online manner. 
The algorithm internally runs the algorithm~$B$, 
starting on the initial assignment of nodes to clusters that will be
defined below. For a requested page $\sigma^P_t$, it creates a subsequence
of~communication requests for the BRP problem, runs $B$ on them, and serves
$\sigma^P_t$ on the basis of $B$'s responses.

We use the following $2k$ nodes for the BRP problem: paging nodes $p_1,p_2,
\ldots, p_k$, auxiliary nodes $a_1,a_2,\ldots,$ $a_{k-1}$, and a special node
$s$. We say that the node clustering is \emph{well aligned} if one cluster
contains the node $s$ and $k-1$ paging nodes, and the other cluster contains
one paging node and all auxiliary nodes. There is a natural bijection between
possible cache contents and well aligned configurations: the cache consists of
the $k-1$ paging nodes that are in the same cluster as node $s$. (Without loss
of generality, we may assume that the cache of any paging algorithm is always
full, i.e., consists of $k-1$ pages.) If the configuration $c$ of a BRP
algorithm is well aligned, $\textsc{cache}(c)$ denotes the corresponding cache
contents.

The initial configuration for the BRP problem is the well aligned
configuration corresponding to the initial cache (pages
$p_1,p_2,\ldots,p_{k-1}$ in the cache).

For any paging node $p$, let $\comm(p)$ be a subsequence of communication
requests for the BRP problem, consisting of the request $(p, s)$, followed by
$\binom{k-1}{2}$ requests to all pairs of auxiliary nodes. Given an input
sequence $\sigma^P$ for online paging, we construct the input sequence
$\sigma^B$ for the BRP problem in the following way: For a request
$\sigma^P_t$, we repeat a~subsequence $\comm(\sigma^P_t)$ till the node
clustering maintained by $B$ becomes well aligned and $\sigma^P_t$ becomes
collocated with $s$. Note that $B$ must eventually achieve such a node
configuration: otherwise its cost would be arbitrarily large while a sequence
of repeated $\comm(\sigma^P_t)$ subsequences can be served at~a~constant
cost---the competitive ratio of~$B$ would then be unbounded. We denote the
resulting sequence of $\comm(\sigma^P_t)$ subsequences by
$\comm_t(\sigma^P_t)$.

To construct the response to the paging request $\sigma^P_t$, the algorithm
$P$ runs $B$ on $\comm_t(\sigma^P_t)$. Right after processing
$\comm_t(\sigma^P_t)$, the node configuration $c$ of $B$ is well aligned and
$\sigma^P_t$ is collocated with~$s$. Hence, $P$ may change its cache
configuration to $\textsc{cache}(c)$: such a response is feasible, because
since $\sigma^P_t$ is collocated with $s$, it is included by $P$ in the cache.
Furthermore, we may relate the cost of $P$ to the cost of~$B$: If $P$ modifies
the cache contents, the corresponding cost is $1$, as exactly one page has to
be fetched. Such a change occurs only if $B$ changed the clustering 
(at a~cost of at least $2 \cdot \alpha$). Therefore, $2
\cdot \alpha \cdot P(\sigma^P_t) \leq B(\comm_t(\sigma^P_t))$, which,
summed over all requests from sequence $\sigma^P$, yields $2 \cdot
\alpha \cdot P(\sigma^P) \leq B(\sigma^B)$.

Now we show that there exists an (offline) solution \OFF to $\sigma^B$, whose
cost is exactly $2 \cdot \alpha \cdot \OPT(\sigma^P)$. Recall that, for a
paging request $\sigma^P_t$, $\sigma^B$ contains the~corresponding sequence
$\comm_t(\sigma^P_t)$. Before serving the first request
of~$\comm_t(\sigma^P_t)$, \OFF changes its state to a well aligned
configuration corresponding to the cache of $\OPT$ right after serving paging
request $\sigma^P_t$. This ensures that the subsequence $\comm_t(\sigma^P_t)$
is free for \OFF. Furthermore, the cost of node migration of \OFF is $2 \cdot
\alpha$ (two paging nodes are swapped) if $\OPT$ performs a fetch, and 0 if
$\OPT$ does not change its cache contents. Therefore,
$\OFF(\comm_t(\sigma^P_t)) = 2 \cdot \alpha \cdot \OPT(\sigma^P_t)$, which
summed over the entire sequence $\sigma^P$ yields $\OFF(\sigma^B) = 2 \cdot
\alpha \cdot \OPT(\sigma^P)$.

As $B$ is $\rho$-competitive for the BRP problem, there exists a constant
$\beta$, such that for any sequence $\sigma^P$ and the corresponding sequence
$\sigma^B$, it holds that $B(\sigma^B) \leq \gamma \cdot \OFF(\sigma^B) +
\beta$. Combining this inequality with the inequalities between $P$ and $B$
and between \OFF and \OPT yields
\[
	2 \cdot \alpha \cdot P(\sigma^P) \leq 
	B(\sigma^B) \leq \gamma \cdot \OFF(\sigma^B) + \beta
	 =\gamma \cdot 2 \cdot \alpha \cdot \OPT(\sigma^P) + \beta,
\]
and therefore $P$ is $\gamma$-competitive.
\end{proof}

As any deterministic algorithm for the paging problem with cache size $k-1$
has a competitive ratio of~at~least $k-1$~\cite{SleTar85}, we obtain the
following result.

\begin{corollary}
The competitive ratio of the BRP problem on two clusters is at least $k-1$. 
\end{corollary}

\subsection{Additional Lower Bounds}
\label{sec:lower-bounds}

\begin{theorem}\label{thm:loweraugmk}
No~$\delta$-augmented deterministic online algorithm \ONL
can achieve a competitive ratio smaller than~$k$, as long as~$\delta < \ell~$.
\end{theorem}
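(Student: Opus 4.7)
The plan is to design, for every deterministic $\delta$-augmented online algorithm $\on$, a request sequence $\sigma$ forcing $\on(\sigma) \geq k \cdot \off(\sigma)$.

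\textbf{Step 1 (structural lemma).} I would first show that whenever $\delta < \ell$ and $\on$ is in any valid configuration, there exists a $k$-subset $S$ of nodes that is not contained in any single $\on$-cluster. Proof sketch: let $C$ be an $\on$-cluster of maximum size; since $|C|\leq \delta k<k\ell=n$, some node $v\notin C$ exists. Taking $S$ to be $\{v\}$ together with any $k-1$ nodes of $C$ works, because $\on$-clusters are disjoint and hence no single cluster can contain $v$ and elements of $C$ simultaneously. This is precisely where the hypothesis $\delta<\ell$ is used; if $\delta\geq\ell$, $\on$ could place every node in one cluster and the lemma would fail, consistent with the paper's observation that in that regime the problem is trivial.

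\textbf{Step 2 (adversarial construction).} The adversary proceeds in phases, simulating the deterministic $\on$ throughout. At the start of phase $i$, it applies the lemma to pick a target $k$-set $S_i$ that is split across $\on$'s current partition; moreover $S_i$ is chosen to differ from $S_{i-1}$ by only one element, so that $\off$ can update by a single swap. Within the phase the adversary repeatedly issues requests between pairs in $S_i$ that are currently non-collocated in $\on$, until $\on$ has either consolidated $S_i$ into a single cluster through migration, or accumulated enough communication cost to be charged against the phase.

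\textbf{Step 3 (offline reference).} Given foreknowledge of $S_1,S_2,\ldots$, $\off$ maintains one of its clusters equal to the current $S_i$ and updates it between phases by one $2\alpha$-swap. All within-phase requests then lie inside that cluster and are served for free, so $\off$ pays only $O(\alpha)$ per phase.

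\textbf{Step 4 (ratio).} I would argue that $\on$ pays $\Omega(k\alpha)$ per phase via a rent-or-buy dichotomy: if $\on$ defers migrating, the adversary can issue enough non-collocated requests inside the spread target $S_i$ to accumulate $\Omega(k\alpha)$ communication cost; if $\on$ migrates, the adversarial update rule that generates $S_{i+1}$ is chosen so that whatever node $\on$ just moved is made ``wrong'' again, forcing a fresh migration in the next phase and extracting $\Omega(k\alpha)$ migration cost amortized over the block of phases. Summing per-phase bounds then gives the claimed competitive ratio of at least $k$.

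\textbf{Main obstacle.} The delicate part is Step~4: $\on$'s augmentation gives it enough slack to hold several would-be $\off$-clusters inside a single cluster, so unlike in ordinary paging one cannot just cycle through $k+1$ pages. The crux is to show that the adversary can always choose $S_{i+1}$ so that $\on$'s last migration does not align with the new target---this is what prevents augmentation from reducing the ratio as it does in paging, and is where the combinatorial subtlety of the lower bound lies.
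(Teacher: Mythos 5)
Your proposal and the paper's proof take genuinely different routes, and the route you chose has a gap precisely at the place you flagged as the ``main obstacle.'' The paper does not use a phase/amortization argument at all. It fixes the ring graph on the $n = k\ell$ nodes, observes that the $k$ order-preserving partitionings $o_1,\dots,o_k$ (contiguous blocks of $k$ consecutive ids) have pairwise-disjoint cut-edge sets, and lets the adversary always request the lowest-indexed ring edge currently cut by $\on$. This makes $\on$ pay at least $1$ per request, while by an averaging argument one of the $k$ order-preserving partitionings has at most a $1/k$-fraction of the requests crossing its cut; $\off$ migrates to that one once, up front, paying $\alpha n = O(1)$ amortized, and never moves again. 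The ratio $k$ falls out in the limit $t\to\infty$ without any per-phase accounting, and the hypothesis $\delta < \ell$ enters only to guarantee $\cut(p_t)$ is never empty.

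Your Step~4 is where the proposal breaks. Your $\off$ performs a $2\alpha$-swap essentially every phase, so $\off$'s cost is $\Theta(\alpha)$ per phase, and to get ratio $k$ you would need $\on$ to pay $\Omega(k\alpha)$ per phase. But your Step~1 only guarantees $S_i$ is not wholly inside one $\on$-cluster; $\on$ may already hold $k-1$ of the $k$ nodes of $S_i$ together, in which case it can end the phase by migrating a single node for $O(\alpha)$, having served only $O(\alpha)$ requests remotely. That gives per-phase ratio $O(1)$, not $k$. The attempted fix --- ``choose $S_{i+1}$ so the last-migrated node is wrong again and amortize $\Omega(k\alpha)$ over a block of phases'' --- does not help, because $\off$ is also paying $\Theta(\alpha)$ in every phase of that block. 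To rescue the phase approach you would need $\off$ to settle on a partitioning and not move for $\Omega(k)$ consecutive phases while still serving almost everything locally --- and constructing target sets with that property in the face of $\on$'s augmented, possibly lopsided configuration is essentially the combinatorial content that the paper's ring structure provides for free via the disjointness of the $k$ order-preserving cuts. As written, the argument does not close, and you correctly identified the location of the missing idea but did not supply it.
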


\begin{proof}
In our construction, all nodes are numbered from $v_0$ to $v_{n-1}$. All
presented requests are edges in~a~ring graph on these nodes with edge $e_i$
defined as $(v_i,v_{(i+1) \mod n })$ for~$i = 0, \ldots, n-1$. At any time,
the adversary gives a communication request between an arbitrary pair of nodes
not collocated by \ONL. As $\delta < \ell$, \ONL cannot fit the entire ring in
a single cluster, and hence such a pair always exists. Such a request
entails a cost of~at~least~$1$ for \ONL. This way, we may define an~input
sequence $\sigma$ of~arbitrary length, such that $\ONL(\sigma) \geq
|\sigma|$.

Now we present $k$ offline algorithms $\OFF_1, \OFF_2, \ldots, \OFF_k$, such
that, neglecting an initial node reorganization they perform before
the input sequence starts, the sum of their total costs on $\sigma$ is exactly
$|\sigma|$. Toward this end, for any $j \in \{0,\ldots,k-1\}$, we define a
set~$\cut(j) = \{ e_j, e_{j+k}, e_{j+2k},
\ldots, e_{j+(\ell-1)\cdot k} \}$. For any~$j$, set $\cut(j)$ defines a
natural partitioning of all nodes into clusters, each containing $k$ nodes.
Before processing $\sigma$, the algorithm $\OFF_j$ first migrates its nodes
(paying at most $n \cdot \alpha$) to the clustering defined by $\cut(j)$ and
then never changes the node placement.

As all sets $\cut(j)$ are pairwise disjoint, for any request $\sigma_t$,
exactly one algorithm $\OFF_j$ pays for the request, and thus $\sum_{j=1}^k
\OFF_j(\sigma_t) = 1$. Therefore, taking the initial node reorganization into
account, we obtain that $\sum_{j=1}^k \OFF_j(\sigma) \leq k \cdot n \cdot
\alpha + \ONL(\sigma)$. By the averaging argument, there exists an offline
algorithm $\OFF_j$, such that $\OFF_j(\sigma) \leq \frac{1}{k} \cdot
\sum_{j=1}^k \OFF_j(\sigma) \leq n \cdot \alpha + \ONL(\sigma) / k$.
Thus, $\ONL(\sigma) \geq k \cdot \OFF_j(\sigma) - k \cdot n \cdot
\alpha \geq k \cdot \OPT(\sigma) - k \cdot n \cdot \alpha$.
The~theorem follows because the additive constant $k \cdot n \cdot \alpha$
becomes negligible as the length of $\sigma$ grows.
\end{proof}

\begin{theorem}
No deterministic online algorithm \ONL can achieve a competitive ratio 
smaller than~$3$ for the case $k = 2$ (without augmentation).
\end{theorem}

\begin{proof} As in the previous proof, we number the nodes from $v_0$ to
$v_{n-1}$. We distinguish three types of node clusterings. Configuration A:
$v_0$ collocated with $v_1$, $v_2$ collocated with $v_3$, other nodes
collocated arbitrarily; configuration B: $v_1$ collocated with $v_2$, $v_3$
collocated with $v_0$, other nodes collocated arbitrarily; configuration C:
all remaining clusterings.

Similarly to the proof of \cref{thm:loweraugmk}, the adversary always
requests a communication between two nodes not collocated by \ONL.
This time the exact choice of such nodes is relevant: \ONL receives request to
$(v_1,v_2)$ in configuration A, and to $(v_0,v_1)$ in configurations B and C.

We define three offline algorithms. They keep nodes
$\{v_0,\ldots,v_3\}$ in the first two clusters and the remaining nodes in the
remaining clusters (the remaining nodes never change their clusters). 
More concretely, $\OFF_1$ keeps nodes $\{v_0,\ldots,v_3\}$ always in
configuration A and $\OFF_2$ always in configuration B. Furthermore, we define
the third algorithm $\OFF_3$ that is in configuration B if \ONL is in
configuration A, and is in configuration A if \ONL is in configuration B or C.

We split the cost of $\ONL$ into the cost for serving requests, $\ONLreq$, and
the cost paid for its migrations, $\ONLmig$. Observe that, for~any~request
$\sigma_t$, $\OFF_1(\sigma_t) + \OFF_2(\sigma_t) = \ONLreq(\sigma_t)$.
Moreover, as $\OFF_3$ does not pay for any request and migrates only when 
\ONL does, $\OFF_3(\sigma_t) \leq \ONLmig(\sigma_t)$. Summing up,
$\sum_{j=1}^3 \OFF_j(\sigma_t) \leq \ONL(\sigma_t)$ for any request $\sigma_t$.
Taking into account the initial reconfiguration of nodes in $\OFF_j$ solutions
(which involves at~most one swap of cost $2 \cdot \alpha$), we obtain that
$\sum_{j=1}^3 \OFF_j(\sigma) \leq 2 \cdot \alpha + \ONL(\sigma)$. Hence, by the
averaging argument, there exists $j \in \{1,2,3\}$, such that $\ONL(\sigma)
\geq 3 \cdot \OFF_j(\sigma) - 2 \cdot \alpha \geq 3 \cdot \OPT(\sigma) - 
2 \cdot \alpha$. This concludes the proof, as $2 \cdot \alpha$ becomes
negligible as the length of $\sigma$ grows.
\end{proof}

\section{Conclusion}
\label{sec:conclusion}

This paper initiated the study of a natural dynamic partitioning problem which
finds applications, e.g., in the context of virtualized distributed systems
subject to changing communication patterns. We derived upper and lower bounds,
both for the general case as well as for a special case, related to a dynamic
matching problem. The natural research direction is to develop better
deterministic algorithms for the non-augmented variant of the general case,
improving over the straightforward $O(k^2 \cdot \ell^2)$-competitive algorithm
given in \cref{sec:upper}. While the linear dependency on $k$ is
inevitable (cf.~\cref{sec:lower}), it is not known whether
an~algorithm whose competitive ratio is independent of~$\ell$ is possible. We
resolved this issue for the $O(1)$-augmented variant, for which we gave an
$O(k \log k)$-competitive algorithm.

\bibliographystyle{siamplain}
\bibliography{references}

\end{document}